 \newtheorem{thm}{Theorem}[section]
 \newtheorem{lem}[thm]{Lemma}
 \theoremstyle{definition}
 \newtheorem{defn}[thm]{Definition}
 \theoremstyle{remark}
 \numberwithin{equation}{section}
\begin{document}

%
%
%
%
%
%
%
%
%
\title[KS Sets \&\ $n$OA Equations]
 {Kochen-Specker Sets and\\ Generalized Orthoarguesian Equations}
\author[Megill]{Norman D.~Megill}

\address{%
Boston Information Group\\
19 Locke Ln.\\
Lexington, MA 02420\\
USA}

\email{nm@alum.mit.edu}

\author[Pavi\v ci\'c]{Mladen Pavi\v ci\'c}
\address{ Institute for Theoretical Atomic, Molecular, and Optical Physics\br
Physics Department at Harvard University and\br
Harvard-Smithsonian Center for Astrophysics\br
 Cambridge, MA 02138 USA and\br
Physics Chair\br
Faculty of Civil Engineering\br
University of Zagreb\br
Zagreb, Croatia}
\email{pavicic@grad.hr}
\subjclass{Primary 46C15; Secondary 06B20}

\keywords{Hilbert space, Hilbert lattice, generalized orthoarguesian equations, Kochen-Specker sets}

\date{June 25, 2009}

\begin{abstract}
Every set (finite or infinite) of quantum vectors (states) satisfies
generalized orthoarguesian equations ($n$OA). We consider two
3-dim Kochen-Specker (KS) sets of vectors and show how each of
them should be represented by means of a Hasse diagram---a lattice,
an algebra of subspaces of a Hilbert space--that
contains rays and planes determined by the vectors so as to satisfy
$n$OA. That also shows why they cannot be represented by a special
kind of Hasse diagram called a Greechie diagram, as has been
erroneously done in the literature. One of the KS sets (Peres') is an
example of a lattice in which 6OA pass and 7OA fails, and that closes
an open question of whether the 7oa class of lattices properly contains
the 6oa class. This result is important because it provides additional
evidence that our previously given proof of
$n {\rm oa}\subseteq(n+1){\rm oa}$ can be extended to proper
inclusion $n {\rm oa}\subset(n+1){\rm oa}$ and that
$n {\rm OA}$ form an infinite sequence of successively stronger
equations.
 \end{abstract}

\maketitle

\section{Introduction}
\label{sec:intro}

Many authors have tried to empirically justify the mathematically
well-established orthoisomorphism between the so-called Hilbert lattice
and the lattice of subspaces of a Hilbert
space, which has been worked out by many authors over the last
75 years.\cite{birk-v-neum,beltr-cass-book,holl95}
However, a missing link between empirical quantum measurements
and its lattice structure was a proper description of a correspondence
between the standard quantum measurements, which use Hilbert
vectors and states, on the one hand, and Hilbert lattices (algebras
of the closed subspaces of Hilbert space), which make use of
subspaces that contain these vectors and/or are spanned by them,
on the other. What hampered a search for such a correspondence
was a too narrow focus on orthogonality and on infinite-dimensionality
via Greechie lattices (meaning the lattices depicted by Greechie
diagrams). In Ref.\ \cite{bdm-ndm-mp-fresl-10} we gave two
examples: empirical reconstruction of quantum mechanics via
lattice theory and a description of Kochen-Specker's setups via lattice theory.

A lattice can correctly represent a given formal description of a
quantum system only if it satisfies all the equations that the lattice of
subspaces of a Hilbert space satisfies. The only known set of equations
that are related to the algebraic structure of the latter lattice (i.e., excluding
those that are related to states introduced on the lattice) are the
generalized orthoarguesian equations ($n$OA, $n\ge 3$).\ 
\cite{mp-alg-hilb-eq-09} Thus, these equations are an essential tool for 
analysing lattices for particular experimental setups. If a lattice does not 
pass $n$OA for all $n$, then it is not a correct lattice.

In this paper, we analyze two Kochen-Specker (KS) setups: Bub's \cite{bub}
and  Peres' \cite{peres}. We represent them by MMP hypergraphs. Vectors
correspond to vertices in MMP hypergraphs, and tetrads of orthogonal
vectors correspond to edges in MMP hypergraphs. MMP hypergraphs (also
called MMP diagrams) are defined in Ref.\ \cite{pmmm04b} and in
Sec.\ \ref{sec:def1}. One can establish a correspondence between MMP
hypergraphs and lattices of subspaces of a Hilbert space. In such
lattices the vertices of MMP hypergraphs correspond to lattice atoms and
their edges to lattice blocks. Thus any KS setup can eventually be
represented by a lattice.

In Sec.\ \ref{sec:represent}, we first show why KS
setups  cannot be represented by a special kind of lattice, called
Greechie lattices, as erroneously claimed in the literature.  Then we
explain how they can be represented for use with any specifically
chosen Hilbert lattice equation.  In doing so, we introduce a new kind of
lattice---we call it MMPL---that represents all nonorthogonal lattice
elements as well as their meets and joins that take place in a
proof of the chosen equation.  As specific examples, we consider the
3OA equation for Bub's KS lattice and 7OA for Peres'.

We show that Peres' lattice satisfies 3OA through 6OA but violates
7OA. In Sec.\ \ref{sec:main}, we then generate a serious of other
lattices with this property. In Ref.\ \cite{mpoa99}, we proved
that all individual orthoarguesian equations found previously
(by other authors) were equivalent to either 3OA or 4OA and
showed lattices in which 3OA and 4OA passed but 5OA failed.
In Ref.\ \cite{pm-ql-l-hql2}, we found lattices in
which 6OA failed and OAs up to 5OA passed.

Therefore, our finding of a series of lattices that satisfy  3OA-6OA but
fail in 7OA amounts to a very strong indication that $n$oa's properly
contain each other for successively increasing $n$, for all $n$.

\section{Lattice Definitions and Theorems}
\label{sec:def1}

The closed subspaces of a Hilbert space form an algebra called a Hilbert
lattice (defined by Def.\ \ref{def:hl}).  In any Hilbert lattice,
the operation \it meet\/\rm, $a\cap b$, corresponds to
set intersection, ${\mathcal H}_a\bigcap{\mathcal H}_b$, of subspaces ${\mathcal
H}_a,{\mathcal H}_b$ of Hilbert space ${\mathcal H}$, the ordering relation
$a\le b$ corresponds to ${\mathcal H}_a\subseteq{\mathcal H}_b$, the operation
\it join\/\rm, $a\cup b$, corresponds to the smallest closed subspace of
$\mathcal H$ containing ${\mathcal H}_a\bigcup{\mathcal H}_b$, and
the \it orthocomplement\/\rm\ $a'$ corresponds
to ${\mathcal H}_a^\perp$, the set of vectors orthogonal to all vectors in
${\mathcal H}_a$. Within Hilbert space there is also an operation which
has no parallel in the Hilbert lattice: the sum of two subspaces
${\mathcal H}_a+{\mathcal H}_b$, which is defined as the set of sums of vectors
from ${\mathcal H}_a$ and ${\mathcal H}_b$. We also have
${\mathcal H}_a+{\mathcal H}_a^\perp={\mathcal H}$. One can define
all the lattice operations on a Hilbert space itself following the above
definitions (${\mathcal H}_a\cap{\mathcal H}_b={\mathcal H}_a\bigcap{\mathcal H}_b$,
etc.). Thus we have
${\mathcal H}_a\cup{\mathcal H}_b=\overline{{\mathcal H}_a+{\mathcal H}_b}=
({\mathcal H}_a+{\mathcal H}_b)^{\perp\perp}=
({\mathcal H}_a^\perp\bigcap{\mathcal
H}_b^\perp)^\perp$,\cite[p.~175]{isham} where
$\overline{{\mathcal H}_c}$ is the closure of ${\mathcal H}_c$, and therefore
${\mathcal H}_a+{\mathcal H}_b\subseteq{\mathcal H}_a\cup{\mathcal H}_b$.
When ${\mathcal H}$ is finite-dimensional or when
the closed subspaces ${\mathcal H}_a$ and  ${\mathcal H}_b$ are orthogonal
to each other then ${\mathcal H}_a+{\mathcal H}_b={\mathcal H}_a\cup{\mathcal H}_b$.
\cite[pp.~21-29]{halmos}, \  \cite[pp.~66,67]{kalmb83}, \
\cite[pp.~8-16]{mittelstaedt-book}

We briefly recall the definitions we will need.  For further
information, see Refs.~\cite{beran,mpoa99,pm-ql-l-hql1,pm-ql-l-hql2}.

\begin{defn}\label{def:lattice}{\rm \cite{birk2nd}}
A {\em lattice} is an algebra
${\rm L}=\langle\mathcal{L}_{\rm O},\cap,\cup\rangle$
such that the following conditions are satisfied for any
$a,b,c\in\mathcal{L}_{\rm O}$:  $a\cup b=b\cup a,\ \  a\cap b=b\cap a,\ \
(a\cup b)\cup c=a,\cup(b\cup c)\ \ (a\cap b)\cap c=a\cap(b\cap c),\ \
a\cap (a\cup b)=a,\ \ a\cup (a\cap b)=a.$
\end{defn}

\begin{thm}\label{th:ordering}{\rm \cite{birk2nd}}
The binary relation $\le$ defined on {\rm L} as
$a\le b\ {\buildrel{\rm def}\over\Longleftrightarrow}
\ a=a\cap b$
is a partial ordering.
\end{thm}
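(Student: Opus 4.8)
The plan is to verify directly from the lattice axioms of Definition~\ref{def:lattice} that the relation $\le$ is reflexive, antisymmetric, and transitive, using nothing beyond commutativity, associativity, and the two absorption laws. Since $a\le a$ unwinds to the claim $a=a\cap a$, the first task is to establish idempotency of meet. This is the only step that is not an immediate one-liner: I would invoke the absorption law $a\cap(a\cup b)=a$ with $b$ replaced by $a\cap c$ for an arbitrary $c\in\mathcal{L}_{\rm O}$, and then use the companion absorption law $a\cup(a\cap c)=a$ to rewrite the argument, obtaining $a\cap a=a\cap\bigl(a\cup(a\cap c)\bigr)=a$. Dually (though we only need the meet version here) $a\cup a=a$. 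This gives reflexivity.

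For antisymmetry, suppose $a\le b$ and $b\le a$, i.e.\ $a=a\cap b$ and $b=b\cap a$. Commutativity of meet gives $a\cap b=b\cap a$, so $a=a\cap b=b\cap a=b$. For transitivity, suppose $a\le b$ and $b\le c$, i.e.\ $a=a\cap b$ and $b=b\cap c$. Then, substituting and using associativity of meet, $a\cap c=(a\cap b)\cap c=a\cap(b\cap c)=a\cap b=a$, which is exactly $a\le c$. Together with reflexivity this shows $\le$ is a partial ordering on $\mathrm{L}$.

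I do not expect a real obstacle in this argument; it is entirely formal. The only point that needs a small amount of care is the derivation of $a\cap a=a$ from the absorption laws alone (the lattice axioms as stated do not list idempotency as a primitive), and once that is in hand, antisymmetry and transitivity are each a single line of substitution. If one prefers, one can also note in passing that the same manipulations show $a\le b\iff b=a\cup b$, which makes the order-theoretic role of $\cap$ and $\cup$ symmetric, but this is not needed for the statement as given.
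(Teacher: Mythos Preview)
Your argument is correct and is exactly the standard textbook verification: derive $a\cap a=a$ from the two absorption laws, then read off reflexivity, antisymmetry from commutativity of $\cap$, and transitivity from associativity of $\cap$. There is no gap.

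As for comparison with the paper: the paper does not actually supply a proof of this theorem. It is stated with a citation to Birkhoff~\cite{birk2nd} and left at that, since it is a foundational fact being recalled for later use rather than a contribution of the paper. Your write-up is essentially the argument one finds in Birkhoff (or any lattice-theory text), so in that sense you have reproduced what the cited reference would contain. The aside about $a\le b\iff b=a\cup b$ is a pleasant observation but, as you note, not needed here.
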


\begin{defn}{\rm \cite{birk3rd}}
An {\em ortholattice} {\rm (OL)} is an algebra
$\langle\mathcal{L}_{\rm O},',\cap,\cup,0,1\rangle$
such that $\langle\mathcal{L}_{\rm O},\cap,\cup\rangle$ is a lattice
with unary operation $'$ called {\em orthocomplementation}
which satisfies the following
conditions for $a,b\in\mathcal{L}_{\rm O}$ ($a'$ is called
the {\em orthocomplement} of $a$):
$a\cup a'=1, \ \ a\cap a'=0,\ \
\le b\ \Rightarrow \ b'\le a',\ \ a''=a.$
\end{defn}

\begin{defn}\label{def:oml-o}{\rm \cite{pav93,p98}}
An {\em orthomodular lattice} {\rm (OML)} is an {\rm OL}
in which the following condition holds:
$a\leftrightarrow b=1\ \Leftrightarrow\ a=b$
where  $a\leftrightarrow b=1 \ {\buildrel{\rm def}\over\Longleftrightarrow}
\ a\to b=1 \ \&\ \ b\to a=1$, where
$a\to b\ {\buildrel\rm def\over =}\ a'\cup(a\cap b)$
\end{defn}

\begin{defn}\label{def:hl}{\rm \protect{\footnote{For additional
definitions of the terms used in this section see
Refs.~\cite{beltr-cass-book,holl95,kalmb86,mpoa99}.}}}
An orthomodular lattice which satisfies the following
con\-di\-tions is a {\em Hilbert lattice}, {\rm HL}.
\begin{enumerate}
\item {\em Completeness:\/}
The meet and join of any subset of
an {\rm HL} exist.
\item {\em Atomicity:\/}
Every non-zero element in an {\rm HL} is greater
than or equal to an atom. (An atom $a$ is a non-zero lattice element
with $0< b\le a$ only if $b=a$.)
\item {\em Superposition principle:\/}
(The atom $c$
is a superposition of the atoms $a$ and $b$ if
$c\ne a$, $c\ne b$, and $c\le a\cup b$.)
\begin{description}
\item[{\rm (a)}] Given two different atoms $a$ and $b$, there is at least
one other atom $c$, $c\ne a$ and $c\ne b$, that is a superposition
of $a$ and $b$.
\item[{\rm (b)}] If the atom $c$ is a superposition of  distinct atoms
$a$ and $b$, then atom $a$ is a superposition of atoms $b$ and $c$.
\end{description}
\item {\em Minimum height:\/} The lattice contains at least
two elements $a,b$ satisfying: $0<a<b<1$.
\end{enumerate}
\end{defn}

Note that atoms correspond to pure states when defined on the lattice.
We recall that {\it irreducibility\/} and the {\it covering
property\/} follow from the superposition principle.\
\cite[pp.~166,167]{beltr-cass-book} We also recall that any Hilbert
lattice must contain a countably infinite number of atoms.
\cite{ivertsj}

Orthogonal vectors determine directions in which we
can orient our detection devices and therefore also
directions of observable projections.
We can choose one-dimensional subspaces ${\mathcal H}_a,\dots
,{\mathcal H}_e$ as shown in Fig.~\ref{fig:mmp}, where we
denote them as $a,\dots,e$. The Hasse lattice shown in
the figure graphically represents the orthogonality
between the vectors---in our case the ones between
each chosen vector and a plane determined by the other two.
In particular, the orthogonalities are $a\perp b,c,d,e$ since
$a\le b',c',d',e'$, $b\perp c$ since $a\le c'$, and
$d\perp e$ since $d\le e'$. Also, e.g., $b'$ is a
complement of $b$ and that means a plane to which $b$
is orthogonal: $b'=a\cup c$. Eventually $b\cup b'=1$ where
$1$ stands for $\mathcal H$. Greechie lattices are shorthand
representations of a certain class of Hasse lattices. The one
corresponding to our Hasse lattice above is shown in
Fig.~\ref{fig:mmp}.

\begin{figure}[htp]\centering
  \setlength{\unitlength}{0.8pt}
  \begin{picture}(330,100)(0,0)
    \put(10,0){
      \begin{picture}(60,80)(0,0)
        \qbezier(0,40)(30,60)(60,40)
        \put(30,25){\line(0,1){50}}
        \put(30,75){\circle*{7}}
        \put(30,25){\circle*{7}}
        \put(30,50){\circle*{7}}
        \put(0,40){\circle*{7}}
        \put(60,40){\circle*{7}}
        \put(0,50){\makebox(0,0)[b]{$d$}}
        \put(40,54){\makebox(0,0)[b]{$a$}}
        \put(60,50){\makebox(0,0)[b]{$e$}}
        \put(40,25){\makebox(0,0)[l]{$b$}}
        \put(40,75){\makebox(0,0)[l]{$c$}}
      \end{picture}
    } 
    \put(150,5){
      \begin{picture}(60,90)(0,0)
        \put(60,90){\line(-2,-1){60}}
        \put(60,90){\line(-1,-1){30}}
        \put(60,90){\line(0,-1){30}}
        \put(60,90){\line(1,-1){30}}
        \put(60,90){\line(2,-1){60}}
        \put(60,0){\line(-2,1){60}}
        \put(60,0){\line(-1,1){30}}
        \put(60,0){\line(0,1){30}}
        \put(60,0){\line(1,1){30}}
        \put(60,0){\line(2,1){60}}
        \put(0,60){\line(1,-1){30}}
        \put(0,60){\line(2,-1){60}}
        \put(30,60){\line(-1,-1){30}}
        \put(30,60){\line(1,-1){30}}
        \put(60,60){\line(-2,-1){60}}
        \put(60,60){\line(-1,-1){30}}
        \put(60,60){\line(1,-1){30}}
        \put(60,60){\line(2,-1){60}}
        \put(90,60){\line(-1,-1){30}}
        \put(90,60){\line(1,-1){30}}
        \put(120,60){\line(-2,-1){60}}
        \put(120,60){\line(-1,-1){30}}

        \put(60,-5){\makebox(0,0)[t]{$0$}}
        \put(-5,30){\makebox(0,0)[r]{$d$}}
        \put(18,28){\makebox(0,0)[l]{$e$}}
        \put(65,28){\makebox(0,0)[l]{$a$}}
        \put(96,30){\makebox(0,0)[l]{$b$}}
        \put(125,30){\makebox(0,0)[l]{$c$}}
        \put(-5,60){\makebox(0,0)[r]{$d'$}}
        \put(17,62){\makebox(0,0)[l]{$e'$}}
        \put(65,65){\makebox(0,0)[l]{$a'$}}
        \put(96,62){\makebox(0,0)[l]{$b'$}}
        \put(125,60){\makebox(0,0)[l]{$c'$}}
        \put(60,95){\makebox(0,0)[b]{$1$}}
        \put(60,0){\circle*{7}}
        \put(0,30){\circle*{7}}
        \put(30,30){\circle*{7}}
        \put(60,30){\circle*{7}}
        \put(90,30){\circle*{7}}
        \put(120,30){\circle*{7}}
        \put(0,60){\circle*{7}}
        \put(30,60){\circle*{7}}
        \put(60,60){\circle*{7}}
        \put(90,60){\circle*{7}}
        \put(120,60){\circle*{7}}
        \put(60,90){\circle*{7}}
      \end{picture}
    } 
  \end{picture}
  \caption{MMP/Greechie lattice, and its corresponding Hasse lattice.
\label{fig:mmp}}
\end{figure}
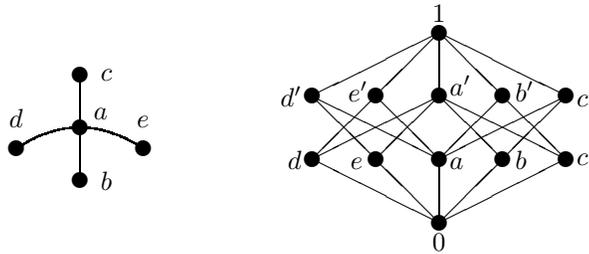

The Hasse lattice shown in Fig.\ \ref{fig:mmp} is a subalgebra of
a Hilbert lattice but, as we show below, already the one with
a third orthogonal triple attached to it is not.
Therefore, for generation of our lattices we should instead
use MMP hypergraphs to which we shall
ascribe a lattice meaning later on.
We define MMP hypergraphs (also called MMP diagrams) as
follows\ \cite{pmmm04b}
\begin{itemize}
\item[(i)] Every vertex belongs to at least one edge;
\item[(ii)] Every edge contains at least 3 vertices;
\item[(iii)] Edges that intersect each other in $n-2$
         vertices contain at least $n$ vertices.
\end{itemize}

\font\1=cmss8
\font\2=cmssdc8
\font\3=cmr8

We encode MMP hypergraphs by means of alphanumeric
and other printable ASCII characters. Each vertex (atom)
is represented by one of the following
characters: {\11\hfil $\:$2\hfil $\:$3\hfil $\:$4\hfil
$\:$5\hfil $\:$6\hfil $\:$7\hfil $\:$8\hfil $\:$9\hfil
$\:$A\hfil $\:$B\hfil $\:$C\hfil $\:$D\hfil $\:$E\hfil
$\:$F\hfil $\:$G\hfil $\:$H\hfil $\:$I\hfil $\:$J\hfil
$\:$K\hfil $\:$L\hfil $\:$M\hfil $\:$N\hfil $\:$O\hfil
$\:$P\hfil $\:$Q\hfil $\:$R\hfil $\:$S\hfil $\:$T\hfil
$\:$U\hfil $\:$V\hfil $\:$W\hfil $\:$X\hfil $\:$Y\hfil
$\:$Z\hfil $\:$a\hfil $\:$b\hfil $\:$c\hfil $\:$d\hfil
$\:$e\hfil $\:$f\hfil $\:$g\hfil $\:$h\hfil $\:$i\hfil
$\:$j\hfil $\:$k\hfil $\:$l\hfil $\:$m\hfil $\:$n\hfil
$\:$o\hfil $\:$p\hfil $\:$q\hfil $\:$r\hfil $\:$s\hfil
$\:$t\hfil $\:$u\hfil $\:$v\hfil $\:$w\hfil $\:$x\hfil
$\:$y\hfil $\:$z\hfil $\:$!\hfil $\:$"\hfil $\:$\#\hfil
$\:${\scriptsize\$}\hfil $\:$\%\hfil $\:$\&\hfil $\:$'\hfil $\:$(\hfil
$\:$)\hfil $\:$*\hfil $\:$-\hfil $\:$/\hfil $\:$:\hfil
$\:$;\hfil $\:$$<$\hfil $\:$=\hfil $\:$$>$\hfil $\:$?\hfil
$\:$@\hfil $\:$[\hfil $\:${\scriptsize$\backslash$}\hfil $\:$]\hfil
$\:$\^{}\hfil $\:$\_\hfil $\:${\scriptsize$\grave{}$}\hfil
$\:${\scriptsize\{}\hfil
$\:${\scriptsize$|$}\hfil $\:${\scriptsize\}}\hfil
$\:${\scriptsize\~{}}}\ , and then again all these characters prefixed
by `+', then prefixed by `++', etc. There is no upper limit on the
number of characters.

Each block is represented by a string of characters that
represent atoms (without spaces). Blocks are separated by
commas (without spaces). All blocks in a line form a
representation of a hypergraph.  The order of the blocks is
irrelevant---however, we shall often present them
starting with blocks forming the biggest loop to facilitate
their possible drawing.  The line must end with a full stop
(i.e.\ a period).
Skipping of characters is allowed.

{\em Generalized orthoarguesian equations} $n${\rm OA}\
\cite{mpoa99,pm-ql-l-hql2} that hold in any Hilbert lattice
follow from the following set of equations that hold in any
Hilbert space.

\begin{thm} \label{th:hs-ssnoa}
Let ${\mathcal M}_0,\ldots, {\mathcal M}_n$ and ${\mathcal N}_0,\ldots,
{\mathcal N}_n$, $n\ge 1$,
 be any subspaces (not necessarily closed) of a Hilbert
space, and let $\bigcap$ denote set-theoretical
intersection and $+$ subspace sum.
We define the subspace term ${\mathcal T}_n(i_0,\ldots,i_n)$ recursively as
follows, where $0\le i_0,\ldots,i_n\le n$:
\begin{align}
&{\mathcal T}_1(i_0,i_1)=({\mathcal M}_{i_0}
         +{\mathcal M}_{i_1})\text{$\bigcap$}({\mathcal N}_{i_0}
         +{\mathcal N}_{i_1}) \label{eq:hs-rec1} \\
{\mathcal T}_m&(i_0,\ldots,i_m)={\mathcal T}_{m-1}(i_0,i_1,i_3,\ldots,i_m)
          \notag \\
&\text{$\bigcap$}({\mathcal T}_{m-1}
(i_0,i_2,i_3,\ldots,i_m)+{\mathcal  T}_{n-1}(i_1,i_2,i_3,\ldots,i_m)),\quad
2\le m\le n    \label{eq:hs-recn}
\end{align}

For $m=2$, this means ${\mathcal T}_2(i_0,i_1,i_2)={\mathcal T}_1(i_0,i_1)$
$\bigcap\ ({\mathcal T}_1(i_0,i_2)+{\mathcal T}_1(i_1,i_2))$.  Then the following condition holds in any
finite- or infinite-dimensional Hilbert space for $n\ge 1$:
\begin{align}
({\mathcal M}_0+{\mathcal N}_0)\text{$\bigcap$}
\cdots\text{$\bigcap$}({\mathcal M}_n+{\mathcal N}_n)\subseteq {\mathcal N}_0
+({\mathcal M}_0\text{$\bigcap$}
         ({\mathcal M}_1+{\mathcal T}_n(0,\ldots,n) )).  \label{eq:hs-ssnoa}
\end{align}
\end{thm}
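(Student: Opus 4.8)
The plan is to prove \eqref{eq:hs-ssnoa} by a direct vector chase inside the lattice of all (not necessarily closed) linear subspaces of $\mathcal H$, using nothing more than the facts that a subspace is closed under sums and negations and that a vector lies in $\mathcal A+\mathcal B$ precisely when it can be written as $a+b$ with $a\in\mathcal A$ and $b\in\mathcal B$. So fix a vector $x$ belonging to the left-hand side of \eqref{eq:hs-ssnoa}. For each $i\in\{0,\ldots,n\}$ we have $x\in{\mathcal M}_i+{\mathcal N}_i$, so choose once and for all a decomposition $x=m_i+n_i$ with $m_i\in{\mathcal M}_i$ and $n_i\in{\mathcal N}_i$. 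The single observation that drives everything is that $m_i-m_j=n_j-n_i$ for all $i,j$, so each difference $m_i-m_j$ lies simultaneously in ${\mathcal M}_i+{\mathcal M}_j$ and in ${\mathcal N}_i+{\mathcal N}_j$.

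The core of the argument is the following claim, to be proved by induction on $m$ (the last term of \eqref{eq:hs-recn} being read with subscript $m-1$, the $n-1$ there being an evident typo): for every sequence $(i_0,\ldots,i_m)$ of distinct indices from $\{0,\ldots,n\}$ with $1\le m\le n$, one has $m_{i_0}-m_{i_1}\in{\mathcal T}_m(i_0,\ldots,i_m)$. The base case $m=1$ is exactly the observation above, since ${\mathcal T}_1(i_0,i_1)=({\mathcal M}_{i_0}+{\mathcal M}_{i_1})\cap({\mathcal N}_{i_0}+{\mathcal N}_{i_1})$. For the inductive step one uses ${\mathcal T}_m(i_0,\ldots,i_m)={\mathcal T}_{m-1}(i_0,i_1,i_3,\ldots,i_m)\cap\bigl({\mathcal T}_{m-1}(i_0,i_2,i_3,\ldots,i_m)+{\mathcal T}_{m-1}(i_1,i_2,i_3,\ldots,i_m)\bigr)$ and verifies the two conjuncts separately: membership of $m_{i_0}-m_{i_1}$ in the first factor is the induction hypothesis applied to the length-$m$ sequence $(i_0,i_1,i_3,\ldots,i_m)$, while membership in the second factor follows from the telescoping $m_{i_0}-m_{i_1}=(m_{i_0}-m_{i_2})-(m_{i_1}-m_{i_2})$ together with the induction hypothesis applied to $(i_0,i_2,i_3,\ldots,i_m)$ and to $(i_1,i_2,i_3,\ldots,i_m)$.

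With the claim in hand the theorem is immediate: applying it to the sequence $(0,1,2,\ldots,n)$ gives $m_0-m_1\in{\mathcal T}_n(0,\ldots,n)$, hence $m_0=m_1+(m_0-m_1)\in{\mathcal M}_1+{\mathcal T}_n(0,\ldots,n)$; combining this with $m_0\in{\mathcal M}_0$ yields $m_0\in{\mathcal M}_0\cap({\mathcal M}_1+{\mathcal T}_n(0,\ldots,n))$, and therefore $x=n_0+m_0\in{\mathcal N}_0+\bigl({\mathcal M}_0\cap({\mathcal M}_1+{\mathcal T}_n(0,\ldots,n))\bigr)$, which is the desired inclusion. Observe that neither completeness nor the orthocomplement of $\mathcal H$ enters anywhere, which is why the statement holds for arbitrary, possibly non-closed subspaces and in every dimension; the content is really that subspace lattices of vector spaces satisfy this Arguesian-type law, of which the $n=2$ instance is the classical Arguesian identity.

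I expect the only genuine difficulty to be in finding the right strengthening to induct on. The theorem mentions a single nested term ${\mathcal T}_n(0,\ldots,n)$, but the recursion \eqref{eq:hs-recn} cannot be unwound without tracking, for \emph{every} sub-sequence, the more precise information that the specific vector $m_{i_0}-m_{i_1}$ attached to its first two indices lies in the corresponding ${\mathcal T}_m$. Once that generalization is isolated, the induction is pure bookkeeping, with the ternary branching of \eqref{eq:hs-recn} mirrored step for step by the telescoping identity $m_{i_0}-m_{i_1}=(m_{i_0}-m_{i_2})-(m_{i_1}-m_{i_2})$.
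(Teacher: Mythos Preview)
Your proof is correct and is precisely the standard vector-chase argument: decompose $x=m_i+n_i$, observe $m_i-m_j=n_j-n_i\in\mathcal T_1(i,j)$, and induct on $m$ via the telescoping identity $m_{i_0}-m_{i_1}=(m_{i_0}-m_{i_2})-(m_{i_1}-m_{i_2})$. The paper itself does not give a self-contained proof but defers to \cite{mpoa99-arXiv,bdm-ndm-mp-fresl-10}, where this same argument appears, so your approach coincides with the intended one (and your reading of the $n-1$ subscript in \eqref{eq:hs-recn} as a typo for $m-1$ is correct).
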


\begin{proof} As given in \cite{mpoa99-arXiv,bdm-ndm-mp-fresl-10}
\end{proof}

We will use the above theorem to derive a condition that holds in the
lattice of closed subspaces of a Hilbert space. In doing so we will
make use of the definitions introduced above and  the following
well-known \cite[p.\ 28]{halmos} lemma.
\begin{lem}
\label{lem:hs-sum} Let ${\mathcal M}$ and ${\mathcal N}$ be two closed
subspaces of a Hilbert space. Then
\begin{align}
{\mathcal M}+{\mathcal N} & \subseteq {\mathcal M}\text{$\bigcup$} {\mathcal N} \label{eq:hs-sumss} \\
{\mathcal M}\perp {\mathcal N} \quad & \Rightarrow\quad {\mathcal M}+{\mathcal N}= {\mathcal M}\text{$\bigcup$} {\mathcal N}
\label{eq:hs-sumeq}
\end{align}
\end{lem}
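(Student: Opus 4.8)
The plan is to treat the two assertions separately, since \eqref{eq:hs-sumss} is immediate from the definitions collected in Section~\ref{sec:def1} and only \eqref{eq:hs-sumeq} requires a genuine analytic fact, namely the closedness of an orthogonal sum.

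For \eqref{eq:hs-sumss} I would simply invoke the identity, recorded in the discussion preceding Definition~\ref{def:lattice}, that for closed subspaces the lattice join is realized inside the Hilbert space as $\mathcal{M}\cup\mathcal{N}=\overline{\mathcal{M}+\mathcal{N}}$, the closure of the algebraic sum. Since any set is contained in its closure, $\mathcal{M}+\mathcal{N}\subseteq\overline{\mathcal{M}+\mathcal{N}}=\mathcal{M}\cup\mathcal{N}$, which is exactly \eqref{eq:hs-sumss}; nothing further is needed.

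For \eqref{eq:hs-sumeq}, in view of \eqref{eq:hs-sumss} and the identity $\mathcal{M}\cup\mathcal{N}=\overline{\mathcal{M}+\mathcal{N}}$, it suffices to prove that when $\mathcal{M}\perp\mathcal{N}$ the algebraic sum $\mathcal{M}+\mathcal{N}$ is already closed, for then $\mathcal{M}+\mathcal{N}=\overline{\mathcal{M}+\mathcal{N}}=\mathcal{M}\cup\mathcal{N}$. To establish closedness I would take a sequence $z_k=x_k+y_k$ with $x_k\in\mathcal{M}$ and $y_k\in\mathcal{N}$ converging to some $z\in\mathcal{H}$; being convergent it is Cauchy, and the Pythagorean identity $\|z_k-z_\ell\|^2=\|x_k-x_\ell\|^2+\|y_k-y_\ell\|^2$, which holds because $x_k-x_\ell\in\mathcal{M}$ is orthogonal to $y_k-y_\ell\in\mathcal{N}$, forces $(x_k)$ and $(y_k)$ to be Cauchy as well. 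By completeness they converge, and since $\mathcal{M}$ and $\mathcal{N}$ are closed their limits lie in $\mathcal{M}$ and $\mathcal{N}$ respectively; the sum of these limits is $z$, so $z\in\mathcal{M}+\mathcal{N}$, proving $\mathcal{M}+\mathcal{N}$ closed.

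I expect essentially no obstacle here; the only point that wants care is that \eqref{eq:hs-sumeq} genuinely uses \emph{orthogonality} of $\mathcal{M}$ and $\mathcal{N}$, not merely $\mathcal{M}\cap\mathcal{N}=\{0\}$ — the Pythagorean splitting of the Cauchy condition is precisely where $\mathcal{M}\perp\mathcal{N}$ is invoked, and without it the algebraic sum of two closed subspaces can fail to be closed. Since the statement is classical, one could alternatively just cite \cite[p.~28]{halmos}, but the short self-contained argument above is the version I would include.
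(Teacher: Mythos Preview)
Your argument is correct. The paper does not actually prove Lemma~\ref{lem:hs-sum}; it simply labels it ``well-known'' and cites \cite[p.~28]{halmos}, which is precisely the alternative you mention at the end. Your self-contained proof---using $\mathcal{M}\cup\mathcal{N}=\overline{\mathcal{M}+\mathcal{N}}$ for \eqref{eq:hs-sumss} and the Pythagorean splitting of a Cauchy sequence to show the orthogonal sum is closed for \eqref{eq:hs-sumeq}---is the standard one and goes strictly beyond what the paper provides.
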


\begin{thm} \label{th:hs-noa} {\rm (Generalized Orthoarguesian Laws)}
Let ${\mathcal M}_0,\ldots, {\mathcal M}_n$ and ${\mathcal N}_0,\ldots,
{\mathcal N}_n$, $n\ge 1$, be closed subspaces of a Hilbert space.
We define the term ${\mathcal T}^{\mbox{\tiny{$\bigcup$}}}_n(i_0,\ldots,i_n)$ by
substituting $\bigcup$ for $+$ in the term
${\mathcal T}_n(i_0,\ldots,i_n)$ from Theorem~\ref{th:hs-ssnoa}.
Then following condition holds in any finite- or
infinite-dimensional Hilbert space for
$n\ge 1$:
\begin{align}
{\mathcal M}_0 & \perp {\mathcal N}_0 \ \& \ \cdots \ \& \
      {\mathcal M}_n\perp {\mathcal N}_n \quad\Rightarrow\quad
          \notag \\
  &  ({\mathcal M}_0\text{$\bigcup$} {\mathcal N}_0)\text{$\bigcap$}
\cdots\text{$\bigcap$}({\mathcal M}_n\text{$\bigcup$} {\mathcal N}_n)
 \le
          {\mathcal N}_0\text{$\bigcup$} ({\mathcal M}_0\text{$\bigcap$}
         ({\mathcal M}_1\text{$\bigcup$}
              {\mathcal T}^\text{$\bigcup$}_n(0,\ldots,n) )). \label{eq:hs-noa}
\end{align}
\end{thm}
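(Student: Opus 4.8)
The plan is to derive the lattice inequality \eqref{eq:hs-noa} from the Hilbert-space inclusion \eqref{eq:hs-ssnoa} of Theorem~\ref{th:hs-ssnoa}, converting each subspace sum $+$ into a lattice join $\cup$ by means of Lemma~\ref{lem:hs-sum}. The orthogonality hypotheses ${\mathcal M}_i\perp{\mathcal N}_i$ are exactly what is needed to make this conversion lossless on the left-hand side, via \eqref{eq:hs-sumeq}; on the right-hand side only the one-directional inclusion \eqref{eq:hs-sumss} is used.

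First I would establish the auxiliary fact ${\mathcal T}_n(i_0,\ldots,i_n)\subseteq{\mathcal T}^{\cup}_n(i_0,\ldots,i_n)$ for every index choice, by induction on $m$ along the recursion \eqref{eq:hs-rec1}--\eqref{eq:hs-recn}. The base case ${\mathcal T}_1(i_0,i_1)=({\mathcal M}_{i_0}+{\mathcal M}_{i_1})\cap({\mathcal N}_{i_0}+{\mathcal N}_{i_1})\subseteq({\mathcal M}_{i_0}\cup{\mathcal M}_{i_1})\cap({\mathcal N}_{i_0}\cup{\mathcal N}_{i_1})={\mathcal T}^{\cup}_1(i_0,i_1)$ follows from \eqref{eq:hs-sumss} applied to each factor and monotonicity of $\cap$. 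In the inductive step the hypothesis gives ${\mathcal T}_{m-1}(\cdots)\subseteq{\mathcal T}^{\cup}_{m-1}(\cdots)$ for the three sub-terms in \eqref{eq:hs-recn}; since every ${\mathcal T}^{\cup}_{m-1}(\cdots)$ is a closed subspace (built from closed subspaces by $\cap$ and $\cup$), \eqref{eq:hs-sumss} together with monotonicity of $+$ and $\cap$ carries ${\mathcal T}_{m-1}(\cdots)\cap\bigl({\mathcal T}_{m-1}(\cdots)+{\mathcal T}_{m-1}(\cdots)\bigr)$ into ${\mathcal T}^{\cup}_{m-1}(\cdots)\cap\bigl({\mathcal T}^{\cup}_{m-1}(\cdots)\cup{\mathcal T}^{\cup}_{m-1}(\cdots)\bigr)={\mathcal T}^{\cup}_m(\cdots)$. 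It is important that the ${\mathcal T}_m$ themselves need not be closed; only the closedness of the ${\mathcal T}^{\cup}_m$ is invoked.

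Then I would assemble the inclusion chain. On the left of \eqref{eq:hs-ssnoa}, ${\mathcal M}_i\perp{\mathcal N}_i$ and \eqref{eq:hs-sumeq} give ${\mathcal M}_i+{\mathcal N}_i={\mathcal M}_i\cup{\mathcal N}_i$ for each $i$, so the left-hand side of \eqref{eq:hs-noa} equals that of \eqref{eq:hs-ssnoa}. On the right, the auxiliary fact together with monotonicity of $\cap$ and $+$, plus ${\mathcal M}_1+{\mathcal T}^{\cup}_n\subseteq{\mathcal M}_1\cup{\mathcal T}^{\cup}_n$ from \eqref{eq:hs-sumss} (legitimate since ${\mathcal T}^{\cup}_n$ is closed), give ${\mathcal N}_0+\bigl({\mathcal M}_0\cap({\mathcal M}_1+{\mathcal T}_n(0,\ldots,n))\bigr)\subseteq{\mathcal N}_0+\bigl({\mathcal M}_0\cap({\mathcal M}_1\cup{\mathcal T}^{\cup}_n(0,\ldots,n))\bigr)$; and one final application of \eqref{eq:hs-sumss} to the outermost sum, whose two summands ${\mathcal N}_0$ and ${\mathcal M}_0\cap({\mathcal M}_1\cup{\mathcal T}^{\cup}_n(0,\ldots,n))$ are closed, replaces the leading $+$ by $\cup$. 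Chaining everything with \eqref{eq:hs-ssnoa} and reading $\subseteq$ as the order $\le$ of closed subspaces yields \eqref{eq:hs-noa}.

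I expect the only place requiring care to be the inductive fact ${\mathcal T}_n\subseteq{\mathcal T}^{\cup}_n$: one must avoid treating the ${\mathcal T}_m$ as closed (sums of closed subspaces need not be closed) and instead route each use of Lemma~\ref{lem:hs-sum} through the closed ${\mathcal T}^{\cup}_m$ terms, absorbing the non-closed ${\mathcal T}_m$ into them by plain monotonicity of $+$ and $\cap$. The rest is routine bookkeeping, with all the genuine content residing in Theorem~\ref{th:hs-ssnoa}.
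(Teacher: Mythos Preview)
Your proposal is correct and follows essentially the same route as the paper: use the orthogonality hypotheses with \eqref{eq:hs-sumeq} to identify the two left-hand sides, use \eqref{eq:hs-sumss} to bound the right-hand side of \eqref{eq:hs-ssnoa} by that of \eqref{eq:hs-noa}, and chain via Theorem~\ref{th:hs-ssnoa}. The paper's proof is terser and does not spell out the induction for ${\mathcal T}_n\subseteq{\mathcal T}^{\cup}_n$ or the closedness bookkeeping, which you have handled carefully and correctly.
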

\begin{proof} By the orthogonality hypotheses and Eq.~(\ref{eq:hs-sumeq}), the
left-hand side of Eq.~(\ref{eq:hs-noa}) equals the left-hand side
of Eq.~(\ref{eq:hs-ssnoa}).  By Eq.~(\ref{eq:hs-sumss}),
the right-hand side of Eq.~(\ref{eq:hs-ssnoa}) is a subset
of the right-hand side of Eq.~(\ref{eq:hs-noa}).  Eq.~(\ref{eq:hs-noa})
follows by Theorem~\ref{th:hs-ssnoa} and the transitivity of
the subset relation.
\end{proof}

Ref.~\cite{mpoa99} shows that in any OML (which includes the lattice of
closed subspaces of a Hilbert space, i.e., the Hilbert lattice),
Eq.~(\ref{eq:hs-noa}) is equivalent to the $m$OA law Eq.~(\ref{eq:noa})
for $m=n+2$, thus establishing the proof of Theorem~\ref{th:noa}.

\begin{defn}
\label{def:noa}
We define an operation
${\buildrel (n)\over\equiv}$ on $n$ variables
$a_1,\ldots,a_n$ ($n\ge 3$) as follows:
\begin{align}
a_1&\ {\buildrel (3)\over\equiv}a_2\
{\buildrel\rm def\over =}\
((a_1\to  a_3)\cap(a_2\to  a_3))
\cup((a_1'\to  a_3)\cap(a_2'\to  a_3)) \notag\\
a_1&{\buildrel (n)\over\equiv}a_2\
{\buildrel\rm def\over =}\ (a_1{\buildrel (n-1)\over\equiv}a_2)\cup
((a_1{\buildrel (n-1)\over\equiv}a_n)\cap
(a_2{\buildrel (n-1)\over\equiv}a_n)), \qquad n\ge 4\,.\label{eq:noaoper}
\end{align}
\end{defn}

\begin{thm}\label{th:noa}
The $n${\rm OA} {\em laws}
\begin{align}
(a_1\to a_3) \cap (a_1{\buildrel (n)\over\equiv}a_2)
\le a_2\to  a_3\,.\label{eq:noa}
\end{align}
hold in any Hilbert lattice.
\end{thm}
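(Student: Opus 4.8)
The plan is to obtain the $n$OA law as an essentially immediate consequence of Theorem~\ref{th:hs-noa} once one knows how Eq.~(\ref{eq:hs-noa}) can be rephrased inside an orthomodular lattice. A Hilbert lattice is, by Definition~\ref{def:hl}, an OML---indeed the OML of closed subspaces of a Hilbert space---so Theorem~\ref{th:hs-noa} already tells us that the implication (\ref{eq:hs-noa}) holds in every HL for all $n\ge 1$. Hence the whole task reduces to a purely lattice-theoretic fact: in any OML, the implication (\ref{eq:hs-noa}) with parameter $n$ is equivalent to the single identity (\ref{eq:noa}) with $m=n+2$. That fact is established in Ref.~\cite{mpoa99}, and Theorem~\ref{th:noa} then follows by instantiating it in HL and discharging the orthogonality premisses by an appropriate choice of the subspace roles.

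To prove that lattice-theoretic fact I would argue as follows. Given arbitrary elements $a_1,\ldots,a_m$, choose the ``subspace'' elements ${\mathcal M}_i,{\mathcal N}_i$ as suitable meets and joins of the $a_j$ and their orthocomplements, so arranged that each premiss ${\mathcal M}_i\perp{\mathcal N}_i$ holds automatically while each ${\mathcal M}_i\cup{\mathcal N}_i$ becomes a Sasaki term $a_j\to a_k=a_j'\cup(a_j\cap a_k)$ in the sense of Definition~\ref{def:oml-o}; conversely one checks that every instance of (\ref{eq:noa}) arises from some such assignment. Then unwind the two recursions side by side---the subspace-term recursion (\ref{eq:hs-recn}) for ${\mathcal T}^{\bigcup}_n$ and the ${\buildrel (n)\over\equiv}$ recursion (\ref{eq:noaoper})---and prove by induction on the recursion depth that, under the chosen assignment, they produce the same lattice elements. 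At each level one must move meets past joins and push orthocomplements inward; every such manipulation rests on orthomodularity, typically in the form of the Foulis--Holland theorem applied to triples in which one of the needed compatibilities (often an orthogonality inherited from the premisses) is present. When the induction is complete, the right-hand side ${\mathcal N}_0\cup({\mathcal M}_0\cap({\mathcal M}_1\cup{\mathcal T}^{\bigcup}_n(0,\ldots,n)))$ of (\ref{eq:hs-noa}) collapses to $(a_1\to a_3)\cap(a_1{\buildrel (n)\over\equiv}a_2)$ and its left-hand side to the term on the left of (\ref{eq:noa}); the index shift $m=n+2$ simply records that ${\mathcal T}_1$ already consumes one pair of indices and the outer inequality another.

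The main obstacle is precisely this term-by-term matching of the two recursions in a general OML. The recursions (\ref{eq:hs-recn}) and (\ref{eq:noaoper}) have visibly different shapes---one intersects a $(m{-}1)$-term with a \emph{sum} of two $(m{-}1)$-terms, the other joins an $(n{-}1)$-operator with a \emph{meet} of two $(n{-}1)$-operators---so the equivalence is not a notational triviality but a genuine induction in which orthomodularity must be invoked at every level and the compatibility of all intermediate subterms must be tracked carefully. Granting the equivalence of (\ref{eq:hs-noa}) and (\ref{eq:noa}) (with $m=n+2$) in every OML, which is the technical heart of Ref.~\cite{mpoa99}, and recalling that HL is an OML in which (\ref{eq:hs-noa}) holds by Theorem~\ref{th:hs-noa}, we conclude that (\ref{eq:noa}) holds in HL for every $m\ge 3$, which is the assertion of Theorem~\ref{th:noa}.
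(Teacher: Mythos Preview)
Your approach is essentially identical to the paper's: the paper also proves Theorem~\ref{th:noa} by invoking Theorem~\ref{th:hs-noa} together with the fact, established in Ref.~\cite{mpoa99}, that in any OML the implication~(\ref{eq:hs-noa}) is equivalent to the $m$OA law~(\ref{eq:noa}) with $m=n+2$. One small slip in your sketch: you have the sides swapped---under the intended assignment the \emph{left}-hand side of~(\ref{eq:hs-noa}) should become the left-hand side $(a_1\to a_3)\cap(a_1{\buildrel (n)\over\equiv}a_2)$ of~(\ref{eq:noa}), and the right-hand side of~(\ref{eq:hs-noa}) should become $a_2\to a_3$---but since you ultimately defer the equivalence to Ref.~\cite{mpoa99} this does not affect the argument.
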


The class of  equations (\ref{eq:noa}) are the {\em generalized
orthoarguesian equations} $n${\rm OA}. \cite{mpoa99,pm-ql-l-hql2}

\section{\label{sec:represent}Lattices That Describe Kochen-Specker Sets}

The Kochen-Specker (KS) theorem claims that
experimental recordings that cannot be predetermined, i.e.\
fixed in advance. Its best known proof is based on sets (KS sets)
to which it was impossible to ascribe classical 0-1 values.
Two such sets are shown in Figs.\ \ref{fig:bub} and \ref{fig:peres}.

\begin{figure}[htp]
\begin{center}
\includegraphics[width=0.98\textwidth,height=0.21\textwidth]{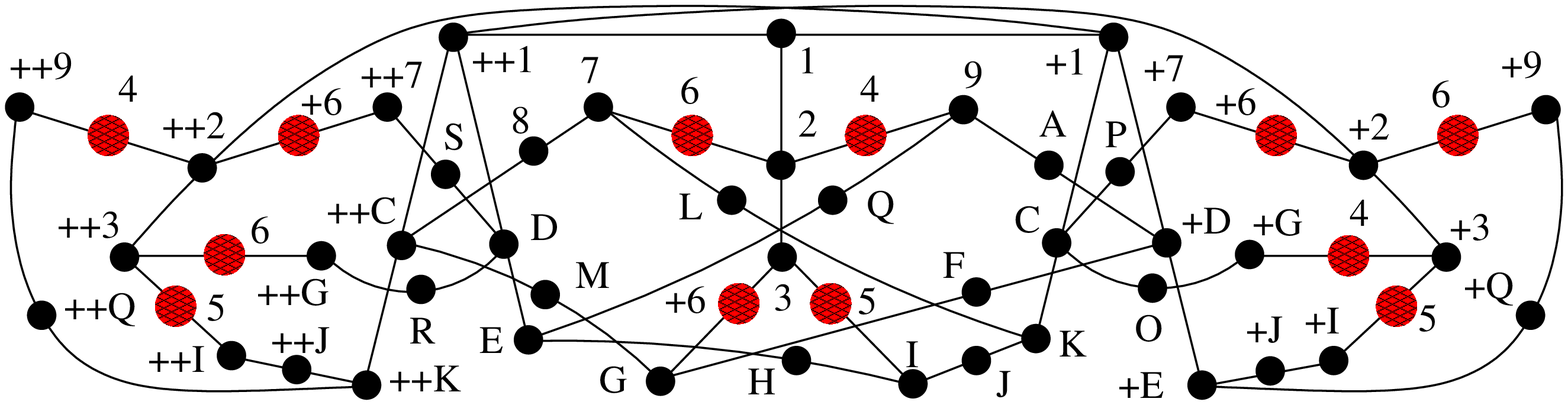}
\end{center}
\caption{Bub's MMP with 49 atoms and 36 blocks. Notice that 12 bigger
dots with a pattern (red online) represent just 4 atoms: 4, 5, 6, and +6.}
\label{fig:bub}
\end{figure}

Bub's set, shown in Fig.\ \ref{fig:bub} is the smallest known KS setup.\cite{bub} Its MMP hypergraph reads:
{\1123,  249,  267,  78++C,  9A+D,  +1CK,  ++1DE,  7LK,  9QE,  35I,  3+6G,  EHI,  IJK,  +DFG,  GM++C,  CP+7,  CO+G,  ++1++C++K,  ++3++2+1,  +6++7++2,  ++24++9,  ++9++Q++K,  ++35++I,  ++I++J++K,  ++36++G,  ++GRD,  DS++7,  +3+2++1,  +7+6+2,  +26+9,  +34+G,  +35+I,  +I+J+E,  +1+D+E,  +E+Q+9,  1+1++1.} It is shown in Fig.~\ref{fig:bub}

Peres' set \cite{peres}, shown in Fig.\ \ref{fig:peres}  is the most symmetric KS set among
those with less than 60 vectors---it has 57
 vectors (vertices) and 40 tetrads (edges).  Its MMP hypergraph reads:
{\1123,  345,  467,  789,  92A,  ABC,  CD4,  AE+J,  5F+J,  IG+9,  IH+5,  I7+1,  JC++1,  ++1+2+3,  +3+4+5,  +4+6+7,  +7+8+9,  +9+2+A,  +A+B+C,  +C+D+4,  +A+E++J,  +5+F++J,  +I+G++9,  +I+71,  +I+H++5,  +J+C+1,  +1++2++3,  ++3++4++5,  ++4++6++7,  ++7++8++9,  ++9++2++A, ++A++B++C,  ++C++D++4,  ++A++EJ,\break ++5++FJ,  ++I++G9, ++I++7++1,  ++I++H5,  ++J++C1,  1+1++1.} Another highly symmetrical KS set is the original Kochen-Specker's one \cite{koch-speck}
but it contains 192 vectors.

\begin{figure}[htp]
\begin{center}
\includegraphics[width=0.98\textwidth,height=0.25\textwidth]{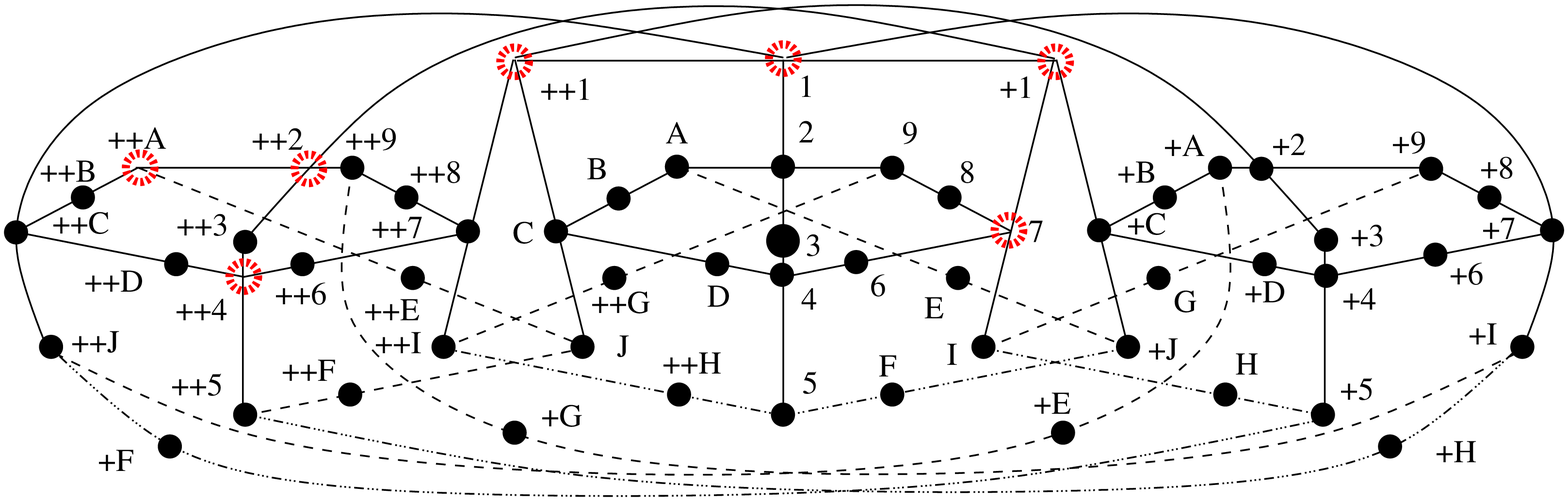}
\end{center}
\caption{Peres' KS MMP hypergraph/lattice. Red (online) rings denote atoms at
which Peres' lattice violates 7OA i.e. the failing assignment of
atoms or co-atoms to the variables of 7OA in the form of
Eq.~(\ref{eq:noa}).}
\label{fig:peres}
\end{figure}

Now, a number of authors have represented KS setups or indeed
any spin-1 experimental setup by means of Greechie
lattices.\cite{shimony,hultgren,svozil-tkadlec,svozil-book-ql,n:tka98,tkadlec,tkadlec-2,smith03,foulis99}

As we show above, the Hilbert lattice of any quantum system
has to satisfy $n$OA equations.
If we assume that the hypergraphs that describe Peres' and Bub's setups
can be represented
by lattices, we would end up with Greechie lattices for them,
i.e., lattices that recognize
only relations between orthogonal atoms and coatoms (spans) from
such orthogonal
sets. When we check---by our program {\tt latticeg} described in
Sec.\ \ref{sec:algo}---whether the Greechie lattices pass
$n$OA equations, we find out that Bub's lattice violates 3OA
(and of course all
$n$OA, $n > 3$) and that Peres' satisfies 3OA-6OA and violates 7OA.
The reason that happens is simple: Greechie diagrams are
not subalgebras of a
Hilbert lattice and the aforementioned authors apparently
did not realize that.

To convince ourselves that Peres' and Bub's Hilbert lattices
really do satisfy 7OA and
8OA, it is enough to invoke Th.\ \ref{th:hs-noa} according
to which any quantum
system (set of vectors/states ascribed to it) has to satisfy
all $n$OA equations.
But let us nevertheless go into some details with Bub's
Hilbert vectors so as to
arrive at proper lattices and proper Hasse diagrams
that they have to use.  A proper
description can only be carried out with lattices and
Hasse diagrams that take into account
joins (spans in terms of vectors) of nonorthogonal atoms
(vectors) as well as the joins and meets (spans
and intersections, respectively) of those joins, etc.

The details are as follows. We consider Bub's KS setup. To be able
to apply our program {\tt vectorfind} for finding the vector
components of Bub's setup shown in Fig.\ \ref{fig:bub}, we have to
write down its MMP representation without gaps in letters. So, we
have {\1123,\dots,DFH,\dots}, where we present only those
Greechie/Hasse lattice atoms in which 3OA failed. Their
Hilbert space vectors are: {\11=\{0,0,1\}}, {\12=\{1,0,0\}},
{\1F=\{1,-2,-1\}}, and {\1D=\{1,1,-1\}}.

In a Hilbert space representation, Bub's KS setup does pass 3OA.
Let us consider 3OA in the following form \
\begin{align}
&a\perp b  \quad\&\quad q\perp n\quad\notag\\
       & \Rightarrow
(a\cup b)\cap (q\cup n)\le b\cup(a\cap
(q\cup((a\cup q)
\cap (b\cup n)))).\notag
\end{align}
In 3-dim Euclidean space, all subspaces are closed (they are lines,
planes, or the whole space), so $a\cup b=a+b$, i.e., subspace join
and subspace sum are the same. Thus, converting joins in the previous
equation to subspace sums and using the orthogonality we get:
\begin{align}
a\perp b & \quad\&\quad q\perp n
\Rightarrow (a+b)\cap(q+n)\notag\\
&\le b+(a\cap(q+((a+q)\cap (b+n)))).\label{eq:bubs-proof}
\end{align}

Now, using the subspaces determined by the aforementioned vectors
and their spans in a Hilbert space, we can easily check that
Bub's representation pass 3OA. For instance, vectors {\11},
{\12}, {\1F}, and {\1D}, determine subspaces {\1\{0,0,$\alpha$\}},
{\1\{$\beta$,0,0\}}, {\1\{$\gamma$,-2$\gamma$,-$\gamma$\}}, and
{\1\{$\delta$,$\delta$,-$\delta$\}}, with arbitrary coefficients
$\alpha,\dots$. They represent lines in both 3-dim Hilbert
space and 3-dim Euclidean space. {\1\{0,0,$\alpha$\}}+{\1\{$\beta$,0,0\}}=
{\1\{$\beta$,0,$\alpha$\}} is a plane spanned by {\11} and {\12}, etc.
We show a verification of Eq.\ (\ref{eq:bubs-proof}) in
Fig.\ \ref{fig:bub-proof}.

\begin{figure}[htp]
\begin{center}
\includegraphics[width=0.47\textwidth,height=0.4\textwidth]{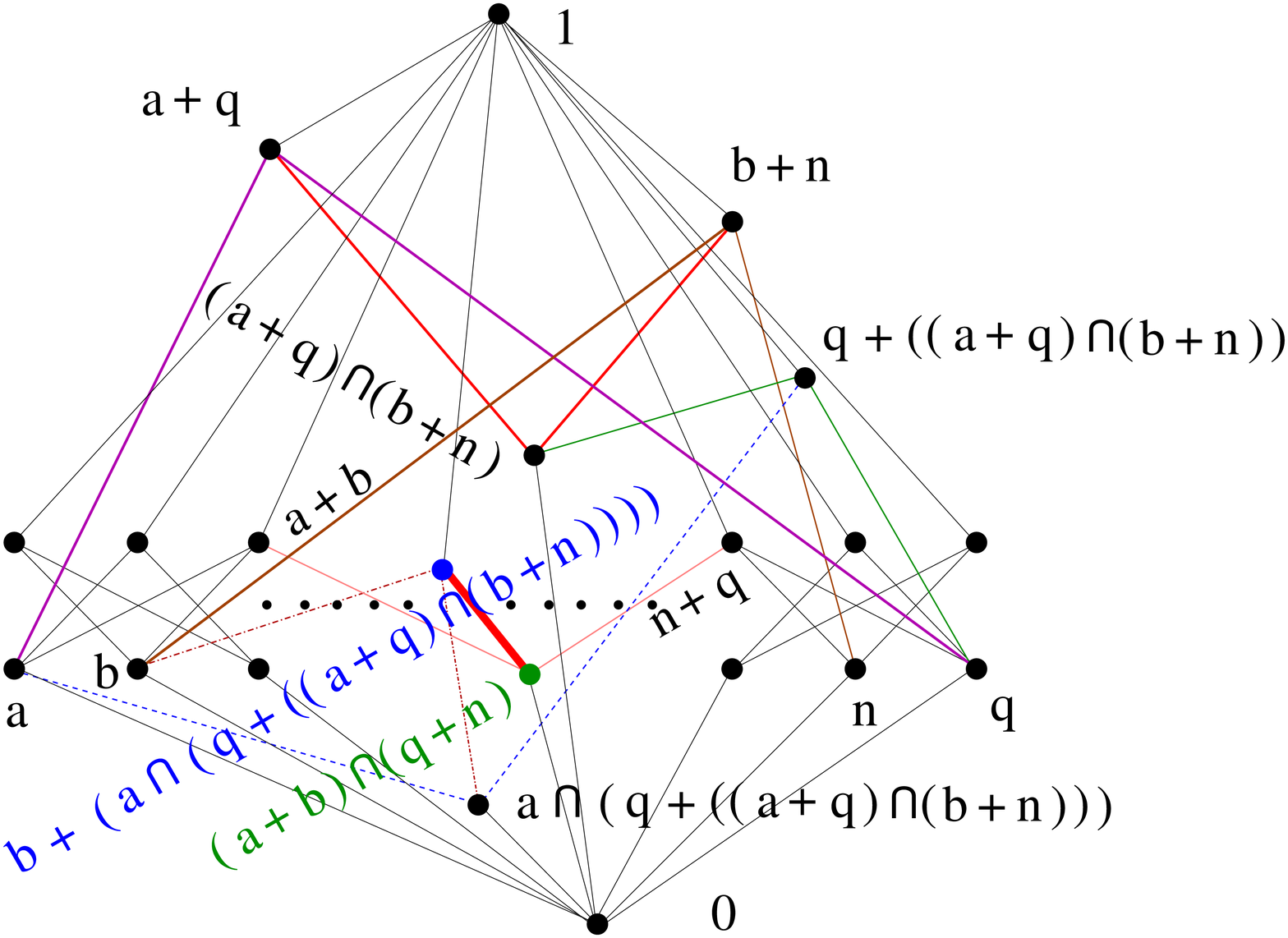}
\end{center}
\caption{A new kind of lattice (MMPL) in which Bub's setup passes 3OA.
The inequality relation in Eq.\ (\ref{eq:bubs-proof}) is represented
by the thick line (red online).}
\label{fig:bub-proof}
\end{figure}

Such lattices---we call them MMPLs---are essential for checking various
other equations, because, e.g., the MMPL shown in Fig.\ \ref{fig:bub-proof}
as an example of a lattice that satisfies 3OA for a particular nodal assignment
to its variables, and we can further check whether
it satisfies other equations that correspond to a particular experimental
setup. Thus when we need a lattice to set up a blueprint for an experiment
in which it is important that a system satisfy particular equations, we shall
use MMPL. When we just need to find  a lattice in which an equation fails
and another pass to show their independence, a Greechie lattice might
serve us better.  Greechie lattices contain only relations between elements
within orthogonal subsets of chosen lattices and therefore for more
complicated equations soon become so large that one cannot
compute them any more.  Thus we were actually lucky to find
that Peres' lattice satisfied 6OA and violated 7OA because that
provided an immediate proof that 7OA does not follow from 6OA.

\section{Main Result: Lattices That Satisfy 6OA and Violate 7OA}
\label{sec:main}

Peres' lattice violates 7OA at  {\1++1, ++4, 1, 7, +1, ++A, ++23},
and we have indicated these elements with the help of rings in
Fig.\ \ref{fig:peres}. But rather than analyze the failure, we will
show how we can arrive at a much smaller lattice that also satisfies
6OA and violates 7OA.  The procedure shows how we can get
smaller lattices using our program {\tt latticeg} to eliminate atoms and
blocks that did not take part in the violations of 7OA we originally
found.

When we apply {\tt latticeg} to the equation 7OA and it arrives at atoms
(or more precisely, lattice nodes) at which 7AO fails, the program gives
the nodes we listed above, and it also gives us the
following additional information about the failure:

{\noindent\tt Greechie atoms not visited:  2 3 4  \dots}

{\noindent\tt Greechie blocks that don't affect the failure: 345 ABC CD4 \dots}

If, during the evaluation of the failing assignment, the meets and
joins contained in a block are never used, then that block is unrelated
to the failure.   The program accumulates such
blocks and puts them into
a list called ``don't affect the failure'' as illustrated by the
sample printout above.  After removing these from the Peres' Greechie
lattice of Fig.\ \ref{fig:peres} and renaming the atoms, we end up
with the smaller Greechie
lattice {\1123, 345, 567, 789, 9AB, BCD, DEF, FGH, HIJ, JKL, LMN, NOP, PQR,
RS1, 4EK, 4AP, AVH, BXL, DUQ, FWN, JTQ} \ \ which is shown in Fig.\
\ref{fig:7oa01}. The left figure shows the blocks we dropped from
Fig.\ \ref{fig:peres}, and the right one is given in the representation
we previously used to show violations of 3OA through 6OA at
lattices presented in \cite{mpoa99-arXiv,pm-ql-l-hql2,mp-alg-hilb-eq-09}
with the maximal loop (tetrakaidecagon, 14-gon) it contains.

\begin{figure}[htp]
\begin{center}
\includegraphics[width=0.65\textwidth,height=0.25\textwidth]{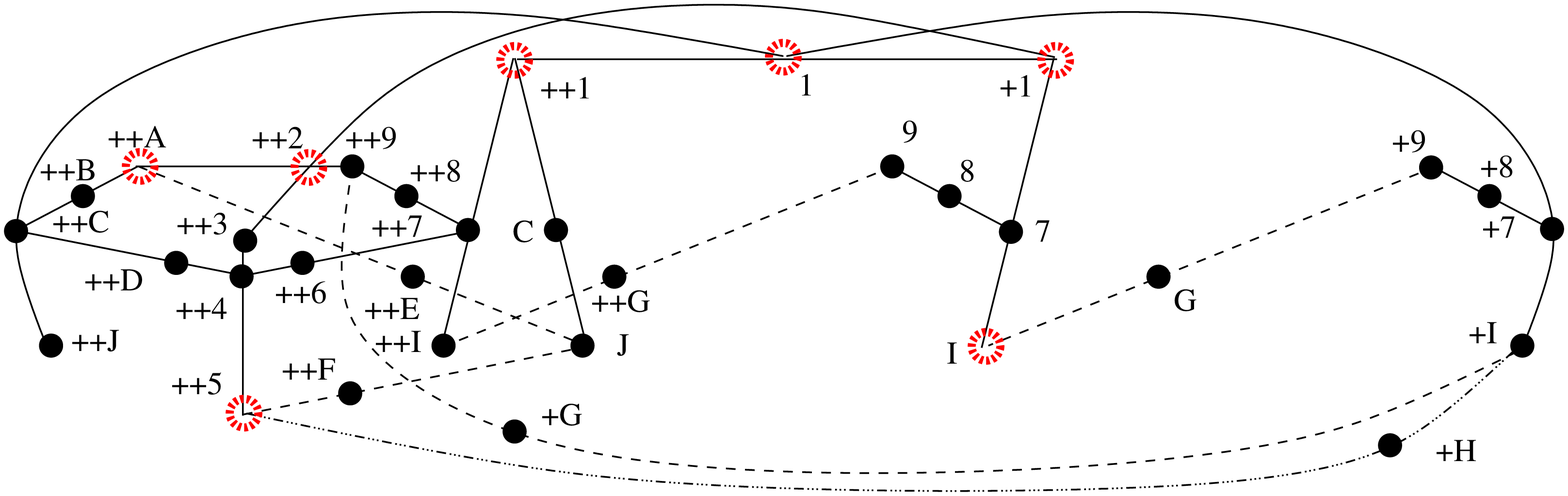}
\includegraphics[width=0.34\textwidth]{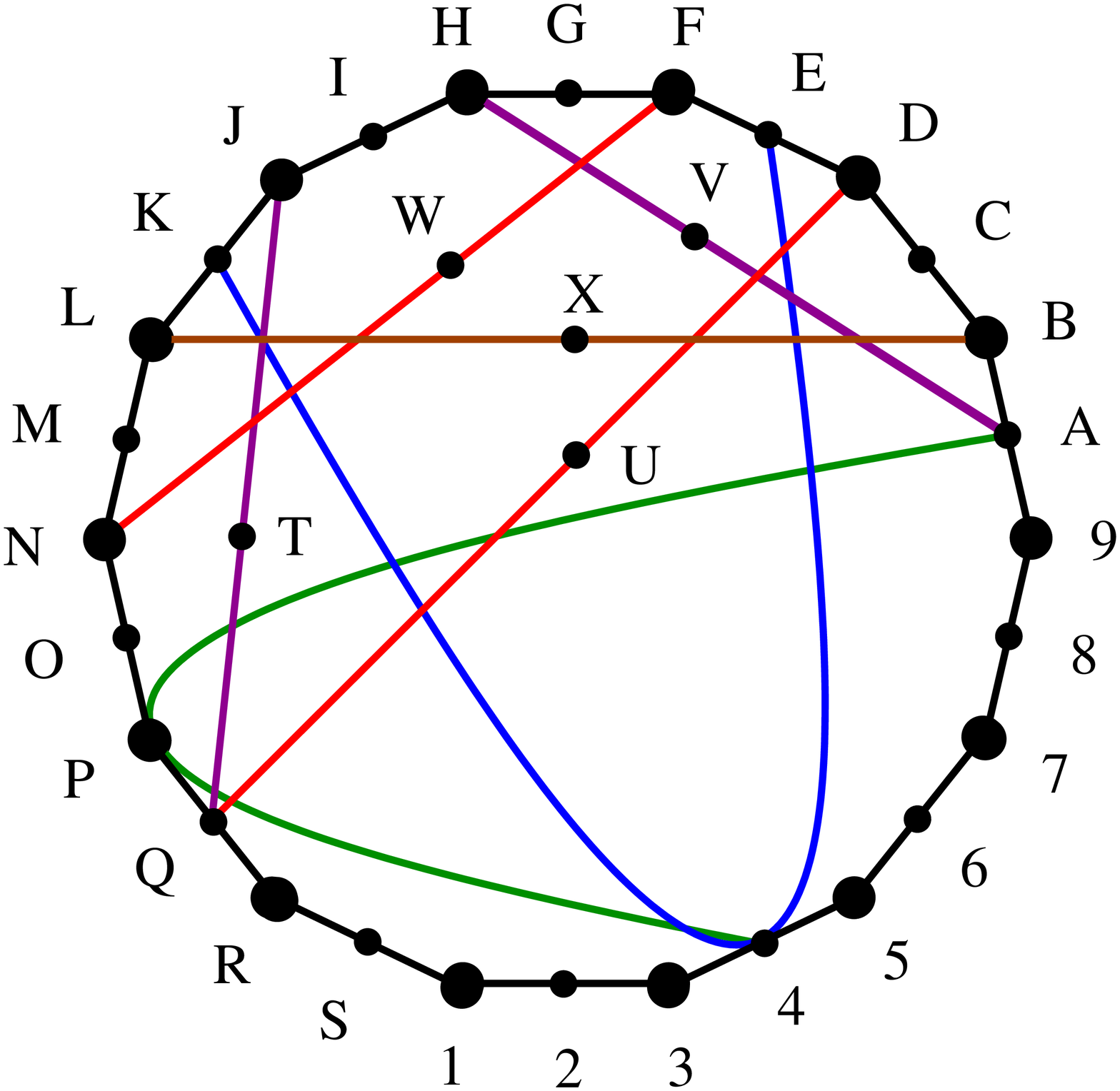}
\end{center}
\caption{A lattice with 33 atoms and 21 blocks that satisfies 6OA
and violates 7OA.
Red (online) rings show atoms that take part in a violation of 7OA. The left
and right diagrams are isomorphic to each other (i.e.\ are two
ways of drawing the same lattice).}
\label{fig:7oa01}
\end{figure}

Of course, there is a whole series of lattices between Peres' 57-40
and the 33-21 shown here with the same property of violating 7OA
and satisfying 6OA which we obtain by adding the removed blocks to
33-21 lattice until we obtain Peres'.

The independence of 7OA emerged from our study to determine which
quantum properties continue to hold when 3-dim KS setups
are approximately (but erroneously, in a strict mathematical
sense) represented by Greechie lattices.  The passing of 6OA
and failing of 7OA was fortuitous and quite unexpected.
Previously, we had little hope of finding such an example with
the help of Greechie diagrams. The discovery of a lattice passing 5OA
but failing 6OA required many weeks on a 500-CPU cluster, and that
discovery itself involved a large element of luck combined with some
judicious intuition by the second author about which lattices
might be promising.  The search for a 7OA counterexample was
expected to be many orders of magnitude harder.  Even the
verification that the single Peres' lattice passed 6OA
required weeks of cluster time, and had it not been for
an early occurrence of a failure in the 7OA test, that test
might have required a much longer time.

\section{Algorithms and Programs}
\label{sec:algo}

The main program that we used for this work was
{\tt latticeg}, which is a general-purpose utility for
testing equations against orthocomplemented lattices expressed
in the form of Greechie diagrams.  Its algorithm is
described in Ref.~\cite{bdm-ndm-mp-1-arXiv}.

The $n$OA law in the form derived directly from Hilbert space,
Eq.~(\ref{eq:hs-noa}), has $2n-2$ variables, whereas in the
equivalent form of Eq.~(\ref{eq:noa}) it has $n$ variables.
Since testing an equation with $m$ variables against a lattice
with $k$ nodes requires that up to $k^m$ combinations be checked,
it is more efficient to use the form of Eq.~(\ref{eq:noa}).

Eq.~(\ref{eq:noa}) has $8\cdot 3^{n-3}+4$ occurrences of its $n$
variables.  For faster computation, we found an equivalent with $6\cdot
3^{n-3}+3$ variable occurrences (which equals 166 for 6OA and 489 for 7OA).
The following theorem shows this equivalent form for $n=3$.  The proof
is similar for larger $n$.  The general form for larger $n$ can be
inferred by looking at the proof, although we have not defined a
``compact'' notation for it as we have for Eq.~(\ref{eq:noa}).
\begin{thm}\label{th:3oaa}
An {\rm OML} in which the equation
\begin{align}
a\cap ((a\cap b)\cup ((a\to c)\cap (b\to c)))\le b'\to c  \label{eq:3oaa}
\end{align}
holds is a {\rm 3OA} and vice-versa.
\end{thm}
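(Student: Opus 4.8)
The statement asserts that, modulo the OML axioms of Definition~\ref{def:oml-o}, Equation~(\ref{eq:3oaa}) and the $n=3$ instance of~(\ref{eq:noa}) (with $a_1{\buildrel(3)\over\equiv}a_2$ expanded via Definition~\ref{def:noa}) are equivalent equational laws; the plan is to prove each implication by instantiating the hypothesis law at suitable terms and then reducing the resulting inclusion to the goal by means of a few OML identities that are immediate from $x\to y=x'\cup(x\cap y)$ together with orthomodularity, namely $x\le x'\to y$, $x\cap(x\to y)=x\cap y$, and $x'\cap(x\to y)=x'$. For the direction ``$3$OA $\Rightarrow$ (\ref{eq:3oaa})'', I substitute $a_1:=a'$, $a_2:=b'$, $a_3:=c$ in~(\ref{eq:noa}); since $(a')'\to c=a\to c$ and $(b')'\to c=b\to c$, the instance reads $(a'\to c)\cap\bigl(((a'\to c)\cap(b'\to c))\cup((a\to c)\cap(b\to c))\bigr)\le b'\to c$. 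Because $a\le a'\to c$ and $b\le b'\to c$, monotonicity of $\cap$ and $\cup$ gives $a\cap b\le(a'\to c)\cap(b'\to c)$, so the left side of~(\ref{eq:3oaa}) is bounded above by the left side of the instance while the two right sides are literally $b'\to c$; this proves~(\ref{eq:3oaa}).

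For the converse ``(\ref{eq:3oaa}) $\Rightarrow$ $3$OA'', which is the substantive half, the plan is to instantiate~(\ref{eq:3oaa}) so that its right side $b'\to c$ is already the target $a_2\to a_3$ --- this forces $b:=a_2'$ and $c:=a_3$ --- and then to choose the remaining argument $a$, a term built from $a_1,a_2,a_3$, so that after expanding the Sasaki hooks and applying orthomodularity the left side of the instance dominates $(a_1\to a_3)\cap(a_1{\buildrel(3)\over\equiv}a_2)$; the law~(\ref{eq:noa}) then follows by transitivity of $\le$. The uniform-in-$n$ statement is obtained by carrying the same manipulation through the recursion that defines ${\buildrel(n)\over\equiv}$ in Definition~\ref{def:noa}, which is the sense in which the proof ``is similar for larger $n$''.

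The converse is where the difficulty lies, and plain monotonicity does not close it: with $b:=a_2'$ and $c:=a_3$ fixed, the outer meet-factor of~(\ref{eq:3oaa}) is the chosen term $a$, and the natural candidates $a:=a_1\to a_3$ and $a:=a_1'\to a_3$ both fail to make the inner terms line up monotonically, essentially because $a_1\to a_3$ and $a_1'\to a_3$ are incomparable in general. Handling this requires genuine orthomodular computation --- expanding each Sasaki hook, repeated use of the identity $x\le y\Rightarrow y=x\cup(y\cap x')$, and the distributivity available when one element commutes with the others (Foulis--Holland theorem) for the subterms that do commute --- most likely after splitting the inner join and treating its two summands separately, and possibly invoking a second instance of~(\ref{eq:3oaa}). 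I expect this term-by-term OML verification, rather than the overall strategy, to be the delicate point; if it becomes unwieldy, an alternative is to route the argument through the Hilbert-space inclusion of Theorem~\ref{th:hs-noa} specialized to $n=3$.
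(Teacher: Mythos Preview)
Your ``$3$OA $\Rightarrow$ (\ref{eq:3oaa})'' direction is correct and coincides exactly with the paper's argument for that half. The gap is in the converse. You commit to $b:=a_2'$ so that the right-hand side $b'\to c$ is literally $a_2\to a_3$, and then observe (correctly) that neither $a:=a_1\to a_3$ nor $a:=a_1'\to a_3$ makes the left sides compare by monotonicity; from there you anticipate heavy OML computation, a possible second instance of (\ref{eq:3oaa}), or a detour through Theorem~\ref{th:hs-noa}. But the constraint $b:=a_2'$ is the wrong one to impose. The paper instead substitutes $a:=a_1\to a_3$, $b:=a_2\to a_3$, $c:=a_3$ in (\ref{eq:3oaa}) and then applies the OML identities $(x\to c)\to c = x'\to c$ and $(x\to c)'\to c = x\to c$. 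After these simplifications the substituted instance becomes \emph{literally} the $3$OA law of Eq.~(\ref{eq:noa}): the inner factor $(a\to c)\cap(b\to c)$ turns into $(a_1'\to a_3)\cap(a_2'\to a_3)$, the factor $a\cap b$ is already $(a_1\to a_3)\cap(a_2\to a_3)$, so the bracket is $a_1{\buildrel (3)\over\equiv}a_2$, and the right side $(a_2\to a_3)'\to a_3$ collapses to $a_2\to a_3$. No monotone comparison, no Foulis--Holland, no second instance, no Hilbert-space detour.

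So the half you flagged as delicate is in fact a one-line substitution; the single idea you are missing is to let the identity $(x\to c)\to c = x'\to c$ do the work of swapping primed and unprimed arguments inside the ${\buildrel (3)\over\equiv}$-term, rather than forcing the right-hand side to match by the literal choice $b:=a_2'$.
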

\begin{proof}   {\bf For Eq.~(\ref{eq:3oaa}):}
To obtain the 3OA law,
Eq.~(\ref{eq:noa}), from Eq.~(\ref{eq:3oaa}), we substitute
$a\to c$ for $a$ and $b\to c$ for $b$, then we use the OML identities
$(a\to c)\to c=a'\to c$,
$(b\to c)\to c=b'\to c$, and
$(b'\to c)\to c=b\to c$.

For the converse, since $x\le x'\to y$,
\begin{align}
a\cap ((&a\cap b)\cup ((a\to c)\cap (b\to c))) \notag \\
&\le(a'\to c)\cap (((a'\to c)\cap (b'\to c))\cup ((a\to c)\cap (b\to c)))
         \notag \\
&=(a'\to c) \cap (a'{\buildrel c\over\equiv}b') \notag \\
&\le b'\to c,         \notag
\end{align}
where for the last step we used an instance of
Eq.~(\ref{eq:noa}) for $n=3$.
\end{proof}

Because of the large size of the $n$OA equations for larger $n$, in
order to ensure that our input to {\tt latticeg} was free from typos
we used an auxiliary utility program, {\tt oagen}, to generate $n$OA
equations in the form of either Eq.~(\ref{eq:noa}) or
Eq.~(\ref{eq:3oaa}).

The evaluation of the 7OA equation on the Peres Greechie diagram
involves 7 nested loops, each with 116 iterations (since its Hasse
diagram has 116 nodes).  For the shorter equation of the form of
Eq.~(\ref{eq:3oaa}), each evaluation at the innermost loop involves an
assignment to 489 variable occurrences and 487 join, meet, and $\to $
operations (the last having a precomputed table in memory from its
join, meet, and orthocomplementation expansion).  Thus $116^7\cdot
489=138,202,145,015,414,784$ (138 quadrillion) operation evaluations
($489 = 487+1+1$ includes the final $\le$ comparison and a single
orthocomplementation) are required for a full scan.

Such a direct, full evaluation is a challenge on today's hardware, even
with a cluster of processors, unless one is very lucky to encounter a
failure early on in the scan (and we were).  In addition, we made
several enhancements to {\tt latticeg} to help make this project more
feasible:
\begin{itemize}

\item The main algorithm was improved.  The original algorithm assigned
each possible combination of lattice nodes to the equation variables,
then evaluated the resulting equation according to the structure of the
lattice (i.e. the suprema, infima, and orthocomplements in the Hasse
diagram derived from the input Greechie diagram).  The main scan
consists of nested loops that processes all nodal assignments to the
first variable in the outermost loop, then all assignments to the second
variable in the next inner loop, and so on.  Since it has 7 variables,
testing the 7OA equation involves 7 nested loops.

The new algorithm takes into account, at each loop level, the variables
in outer loops (which have known assignments) and evaluates as much of
the equation as it can with those known variables.  The equation is then
shrunk with these partial evaluations, for further processing at that
and deeper loop levels.  Eventually, the equation is shrunk to a length
of one, which means that it is completely evaluated.  While a length of
one will always be obtained at the innermost loop level, it may also occur
at an outer level (such as when an expression containing
not-yet-assigned variables is conjoined with a partial evaluation that
resulted in lattice 0).  In such cases, processing of further inner
loops becomes unnecessary.  So, the new algorithm benefits from (1)
shorter equations to evaluate at deeper loop levels and (2) possible
skipping of the deepest loops.  Overall, this results in a speedup of
about a factor of 10 for the 7OA equation evaluation.

Because of the complexity of the new partial evaluation algorithm, it was
put into a new version of {\tt latticeg} called {\tt lattice2g}.
This allows us to check that the old and new algorithms produce the same
result, helping to make sure there isn't a program bug in the new
algorithm.  Having two programs also allows us to directly measure
the speedup afforded by the new algorithm.

\item For testing a huge lattice, a feature was added to break up
the testing into several independent parts.  This way the different parts
can be run on different processors in our cluster.  The test can be
partitioned into any number of outermost and first inner loop
iterations.  For example, the Peres Greechie diagram has a Hasse
representation with 116 nodes.  We can specify that the cluster test the
98th iteration (out of 116) of the outmost loop and the 101st through
110th iteration (out of 116) of the next inner loop.

\item A feature was added to analyze an equation failure to determine
what nodes, atoms, and blocks were not involved in the failure.  In
particular, a block is said not to affect the failure whenever all
operations that ``visit'' (non-0 and non-1) nodes in the block do not
involve any other (non-0 and non-1) nodes in that block.  This is
described in more detail in Sec.~\ref{sec:main}, where we show how this
feature was used to determine which blocks could be removed from Peres'
Greechie lattice to obtain a smaller lattice that satisfies 6OA but
violates 7OA

\end{itemize}

\section{Conclusion}
\label{sec:concl}

After 75 years of research carried out in the field of the algebraic
structure underlying quantum Hilbert space---the Hilbert
lattice---only one class of equations (beyond the orthomodular
lattice laws) that hold in it was
found: the class of orthoarguesian equations. Individual orthoarguesian
equations were found in the eighties and nineties.  All other equations
known to hold in a Hilbert lattice require a state introduced to it.

Then in 2000 we found  \cite{mpoa99} a class ($n$oa) of lattices
determined by {\em generalized orthoarguesian equations} ($n$OA)
and proved that the following
inclusion holds: $n {\rm oa}\subseteq(n+1){\rm oa}$. We also proved
that all previously found OAs are equivalent to either 3OA or 4OA,
we proved that 4OA is strictly stronger than 3OA, and we
found lattices in which 4OA passed but 5OA failed and lattices in which
5OA passed and 6OA failed.\ \cite{pm-ql-l-hql2}

In this paper we found a series of lattices---shown in
Figs.\ \ref{fig:peres} and
\ref{fig:7oa01} and obtained as explained in Sec.\ \ref{sec:main}---in
which 6OA passes and 7OA fails. This is important because it very strongly
indicates that the above inclusion is
strict:  $n {\rm oa}\subset(n+1){\rm oa}$.

We obtained these lattices by analyzing Kochen-Specker sets. The Kochen-Specker
sets correspond to strictly quantum systems. They cannot be given
a classical interpretation at all, and therefore their lattice representation
should be a proper lattice representation.  We wanted to find out how they
can be constructed, what they look like, and what their Hasse diagrams
look like.  In the literature, we only found that 3-dim KS systems in
particular
and spin-1 systems in general were described by Greechie diagrams (as we
stressed in the Introduction).

To our surprise, all but Peres' Greechie lattices violated 3OA, and to our joy
Peres' Greechie lattice passed 3OA through 6OA but violated 7OA. We say
surprise, because every lattice of a quantum system must be represented
by a sublattice of a Hilbert lattice, and the
violations of OAs meant that the representation by Greechie lattices
is incorrect.
It is incorrect because the Greechie lattices are not sublattices
of the lattice of closed subspaces of a Hilbert
space, a fact that escaped the authors mentioned in the Introduction.
Therefore in Sec.\ \ref{sec:represent}, we explain what a proper lattice of any
quantum system should look like and how we can use MMPLs when we need a
lattice for a particular system which passes particular equations.

We should mention out that the numbers of elements (atoms and blocks)
of the smallest known
lattices that satisfy $n$OA but violate $(n+1)$OA do not grow
exponentially. For $3\leq n \leq 7$ we have 13, 17, 22, 28, 33 and
7, 10, 13, 18, 21 atoms and blocks, respectively.\
\cite{pm-ql-l-hql2}  An important open question is whether there
is a pattern that can be identified in this or a similar series
of lattices.  If so, that might lead to a proof that  $(n+1)$OA is
strictly stronger than $n$OA for all $n$.

Since the class of Hilbert lattices (HL) is a subclass of $n$oa for all
$n$ (as Th.\ \ref{th:hs-ssnoa} shows), an open question is what
additional conditions must be added to $n$OA to specify HL, for both the
finite and infinite dimensional cases?  Are there other classes of
equations that hold in every HL when we do not introduce states on it?
(The other known equations such as Godowski's and Mayet's
\cite{pm-ql-l-hql2} assume states.)  How far can we define HL only by
means of sets of equations added to an OL?

\subsection*{Acknowledgment}
One of us (M. P.) would like to thank his host Hossein Sadeghpour
for support during his stay at ITAMP.

Supported by the US National Science Foundation through a
grant for the Institute for Theoretical Atomic, Molecular,
and Optical Physics (ITAMP) at Harvard University and Smithsonian
Astrophysical Observatory and the Ministry of
Science, Education, and Sport of Croatia through the project
No.\ 082-0982562-3160.

Computational support was provided by the cluster Isabella of
the University Computing Centre of the University of Zagreb and
by the Croatian National Grid Infrastructure.

\end{document}